\newcommand{\nc}{\newcommand}
\nc{\heading}[1]{\begin{center} \large \bf #1 \end{center}}
\newcommand{\Ryden}{{Ryd\'{e}n}}
\newcommand{\bzero}{\mbox{\boldmath $0$}}
\newcommand{\bone}{\mbox{\boldmath $1$}}
\newcommand{\hD}{\hat{D}}
\newcommand{\hh}{\hat{h}}
\newcommand{\hm}{\hat{m}}
\newcommand{\hphi}{\hat{\phi}}
\newcommand{\hH}{\hat{H}}
\newcommand{\hR}{\hat{R}}
\newcommand{\cI}{\mathcal{I}}
\newcommand{\cJ}{\mathcal{J}}
\newcommand{\oN}{\overline{N}}
\newcommand{\tL}{\tilde{L}}
\newcommand{\tN}{\tilde{N}}
\newcommand{\tR}{\tilde{R}}
\newcommand{\tZ}{\tilde{Z}}
\newcommand{\tX}{\tilde{X}}
\newcommand{\tS}{\tilde{S}}
\newtheorem{prop}{Proposition}
\begin{document}
\title{An EM Algorithm for Continuous-time Bivariate Markov Chains\thanks{This work
 was supported in part by the U.S.\ National Science Foundation under
 Grant CCF-0916568.
 Part of the work in this paper was presented in a preliminary form at the
 45th Conference on Information Science and Systems hosted by The Johns Hopkins
 University,
 Baltimore, MD, March 23-25, 2011.}}
\author{Brian L. Mark and Yariv Ephraim \\
   Dept. of  Electrical and Computer Engineering \\
   George Mason University \\
   Fairfax, VA 22030, U.S.A.}

\maketitle

\begin{abstract}
We study properties and parameter estimation of finite-state homogeneous
continuous-time {\em bivariate} Markov chains.
Only one of the two processes of the bivariate Markov chain is observable.
The general form of the bivariate Markov chain
studied here makes no assumptions on the structure of the generator of the chain, and hence, neither the underlying process nor the observable process is 
necessarily Markov.  The bivariate Markov chain allows
for simultaneous jumps of the underlying and observable processes. Furthermore, 
the inter-arrival time of observed events is phase-type. The bivariate Markov chain 
generalizes the batch Markovian arrival process as well as the Markov modulated 
Markov process.  We develop an expectation-maximization (EM) procedure for estimating 
the generator of a
bivariate Markov chain, and we demonstrate its performance. The procedure does 
not rely on any numerical integration
or sampling scheme of the continuous-time bivariate Markov chain.
The proposed EM algorithm is equally applicable to multivariate Markov chains. 

\paragraph{\em Keywords:}
Parameter estimation, EM algorithm, Continuous-time bivariate Markov chain,
Markov modulated processes
\end{abstract}

\section{Introduction}
\label{sec-intro}

We consider the problem of estimating the parameter of
a continuous-time finite-state homogeneous {\em bivariate}
Markov chain. Only one of the two processes of the bivariate Markov chain is observable. The
other is commonly referred to as the underlying process.  
We do not restrict the structure of the generator of the
bivariate Markov chain to have any particular form. Thus, simultaneous jumps of the 
observable and underlying processes are 
possible, and neither of these two processes is necessarily
Markov. In \cite{wei:2002}, a continuous-time bivariate Markov chain
was used to model 
delays and congestion in a computer network, and
a parameter estimation algorithm was proposed.  
The model was motivated by the desire to capture correlations 
observed empirically in samples of network delays.  
A continuous-time multivariate Markov chain was used to model
ion channel currents in \cite{ball:1993}.

The bivariate 
Markov chain generalizes commonly used models
such as the batch Markovian arrival process (BMAP) \cite{breuer:2002,klemm:2003},
the Markov modulated Markov process (MMMP) \cite{ephraim:2009},
and the Markov modulated Poisson process (MMPP) \cite{fischer:1992,ryden:1994,ryden:1996,roberts:2006}.
In the BMAP, for example, the generator is an infinite
upper triangular block Toeplitz matrix. In the MMMP and MMPP, the generator is such that no simultaneous
jumps of the underlying and observable processes are allowed. In addition, the underlying processes
in all three examples are homogeneous continuous-time Markov chains.

We develop an expectation-maximization (EM) algorithm
for estimating the parameter of a continuous-time bivariate Markov chain. 
The proposed EM algorithm is equally applicable to multivariate Markov chains.
An EM algorithm for the MMPP was originally developed by \Ryden \cite{ryden:1996}.  
Using a similar approach, EM algorithms
were subsequently developed for the BMAP 
in \cite{breuer:2002,klemm:2003} and the MMMP in \cite{ephraim:2009}.
The EM algorithm developed in the present paper also relies
on \Ryden's approach.  It
consists of closed-form, stable recursions employing scaling and Van Loan's approach 
for computation of
integrals of matrix exponentials \cite{vanloan:1978}, 
along the lines of~\cite{roberts:2006,ephraim:2009}. 

In the parameter estimation algorithm of \cite{wei:2002}, 
the continuous-time bivariate Markov chain is first 
sampled and the transition matrix of the resulting {\em discrete-time}
bivariate Markov chain is estimated using a variant of
the Baum algorithm \cite{baum:1970}. The generator of the continuous-time
bivariate Markov chain is subsequently obtained from the 
transition matrix estimate. As discussed in~\cite{roberts:2008}, this
approach may lead to ambiguous estimates of the generator of the bivariate Markov chain, and in
some cases it will not lead to a valid estimate. Moreover, the approach does not allow structuring
of the generator estimate since it is obtained as a byproduct of the transition matrix estimate. The
EM algorithm developed in this paper estimates the generator of the 
bivariate Markov chain directly from a sample 
path of the continuous-time observable process. This leads to a more accurate computationally efficient
estimator which is free of the above drawbacks.

The remainder of this paper is organized as follows. In Section~\ref{sec:bivariate},
we discuss properties of the continuous-time bivariate Markov chain and develop 
associated likelihood 
functions. In Section~\ref{sec:EM}, we develop the EM algorithm. In 
Section~\ref{sec:numerical}, we discuss the implementation of the EM algorithm and
provide a numerical
example. Concluding remarks are given in Section~\ref{sec:conclusion}.

\section{Continuous-time Bivariate Markov Chain}
\label{sec:bivariate}

Consider a  finite-state homogeneous continuous-time
bivariate Markov chain 
\begin{align}
Z = (X, S) = \{ (X(t), S(t)), ~t \geq 0 \},
\end{align}
defined on a standard probability space,
and assume that it is irreducible.  The
process $S = \{ S(t), t \geq 0 \}$ is the 
underlying process with state space of say $\{ a_1, \ldots, a_r \}$,
and $X = \{ X(t), t \geq 0 \}$ is the observable process 
with state space of say $\{ b_1, \ldots, b_d \}$. The orders $r$ and $d$
are assumed known.
We assume without loss of generality
that $a_i = i$ for $i= 1, \ldots, r$ 
and $b_l = l$ for $l = 1, \ldots, d$. 
The state space of $Z$ is then given by
$\{ 1, \ldots, d \} \times \{ 1, \ldots, r \}$.  
Neither $X$ nor $S$
need be Markov.  Necessary and sufficient conditions for either process to be 
a homogeneous continuous-time Markov chain are given in~\cite[Theorem 3.1]{ball:1993}.
With probability one,
all sample paths of $Z$ are right-continuous step functions with a finite
number of jumps in any finite interval~\cite[Theorem 2.1]{albert:1962}. 

The bivariate Markov chain is parameterized by a generator matrix
\begin{align}
H = \{ h_{ln}(ij),~ l,n = 1,\ldots d; i, j = 1, \ldots r \},
\end{align}
where the set of joint states $ \{ (l, i) \}$ is ordered lexicographically.
With $P(\cdot)$ denoting the probability measure on the given space,  
\begin{align}
 h_{ln}(ij) = \lim_{\epsilon \rightarrow 0} 
 \frac{1}{\epsilon} P( Z(t +  \epsilon )  =  (n, j) \mid Z(t)  =  (l,i) ) ,
\end{align}
for $(l,i) \neq (n,j)$.
The generator matrix can be expressed as a block
matrix $H = \{ H_{ln},~ l, n = 1, \ldots, d \}$,
where $H_{ln} = \{ h_{ln}(ij), ~i, j = 1, \ldots, r \}$
are $r \times r$ matrices.  The number of independent
scalar values that constitute the generator $H$ is
at most $rd(rd-1)$.  Since none of the rows of $H$ is identically zero,
the submatrix $H_{ll}$, $l \in \{ 1, \ldots, d \}$,
is strictly diagonally dominant, i.e.,
\begin{align}
- h_{ll}(ii)  = \sum_{(n,j): (n,j) \neq (l,i)}  h_{ln}(ij) 
 > \sum_{j: j \neq i} h_{ll}(ij),
\label{eq-H_ll-nonsingular}
\end{align}
for all $i = 1, \ldots, r$,
and thus, $H_{ll}$ is nonsingular~\cite[p. 476]{strang:2003}.

Clearly, the observable process $X(t)$ is a deterministic function of the 
bivariate Markov chain $Z(t)$. Conversely, the pair consisting of a 
univariate Markov chain together with a deterministic function of 
that chain is a bivariate Markov chain (see \cite{rudemo:1973}).

\subsection{Density of observable process}
\label{subsec-pathdensity}

\begin{figure}
\centerline{
\includegraphics[scale=1.0, clip=true]{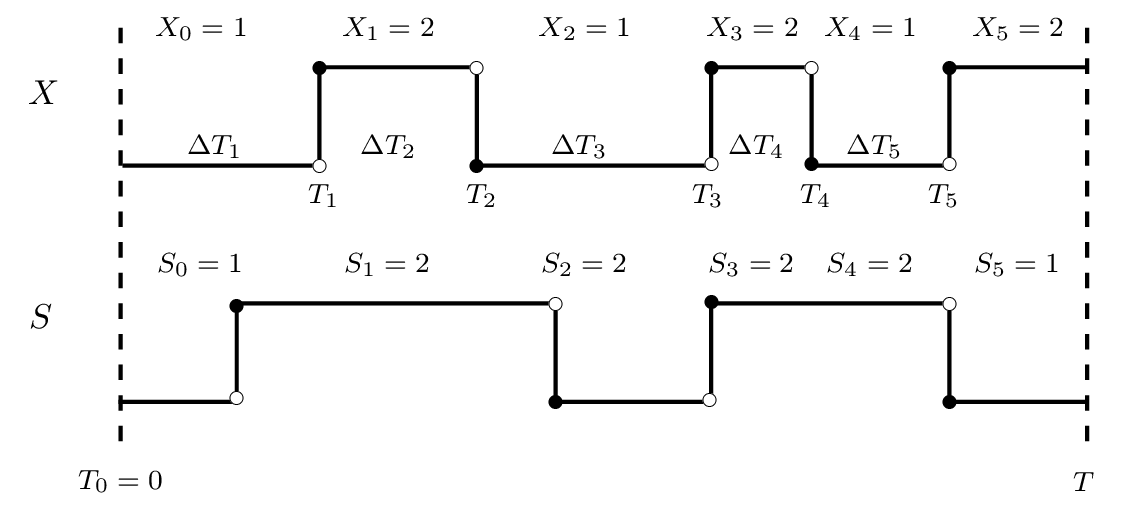}
}
\caption{Example sample path of $Z = (X,S)$.}
 \label{fig-x-s}
\end{figure}

Assume that the observable process $X$ of a bivariate Markov chain $Z = (X,S)$
starts from some state $X_0$
at time $T_0 = 0$ and jumps $N$ times in $[0,T]$
at $0 < T_1 < T_2 < \cdots < T_N \leq T$.  Let $X_k = X(T_k)$
denote the state of $X$ in the interval $[T_k, T_{k+1})$
for $k= 1,2,\ldots, N-1$ and let $X_N$ denote
the state of $X$ in the interval $[T_{N}, T]$.  This convention differs
slightly from that used in \cite{ephraim:2009}, where $X_k \triangleq X(T_{k-1})$,
$k = 1, \ldots, N+1$.  Define $S_k = S(T_k)$ to be the state
of $S$ at the jump time $T_k$ of $X$. Let $Z_k = Z(T_k) = (X_k, S_k)$.
Let $\Delta T_k = T_k - T_{k-1}$ denote the dwell time of $X$
in state $X_{k-1}$ during the interval $[T_{k-1}, T_k)$, $k =  1,\ldots, N$. 
We denote realizations of $X_k$, $S_k$, $Z_k$, $T_k$, and $\Delta T_k$
by $x_k$, $s_k$, $z_k$, $t_k$, and $\Delta t_k$, respectively.
Figure~\ref{fig-x-s} depicts a sample path of a bivariate Markov chain $Z=(X,S)$
for which $N=5$, $r=d=2$, $a_1 = 1$, $a_2 = 2$, $b_1 = 1$, and $b_2 = 2$.
From the figure, we see that the sequence $\{ Z_k \}$ is given by 
\begin{align}
\{ (1,1), (2,2), (1,2), (2,2), (1,2), (2,1) \} .
\end{align}
Note that in Figure~\ref{fig-x-s}, the processes $X$ and $S$
jump simultaneously at times $T_3$ and $T_5$.

The bivariate Markov chain $Z$ is a pure-jump
Markov process and is therefore strong Markov (see \cite[Section 4-1]{wolff:1989})
Using the strong Markov property of $Z$, it follows that
\begin{align}
 P( & Z_{k+1} = z,  \Delta T_{k+1} \leq \tau \mid Z_0, \ldots, Z_k; 
  T_0, \ldots, T_k ) \nonumber \\
&= P(  Z_{k+1} = z,  \Delta T_{k+1} \leq \tau \mid Z_k ) 
\end{align}
for all $z \in \{ 1, \ldots, d \} \times
\{ 1, \ldots, r \}$; $\tau \geq 0$; and $k = 0, 1, 2, \ldots$.
Therefore, $\{ (Z_k, T_k) \}$ is a Markov renewal process (see \cite{cinlar:1975}).
Since the observable process $X$
can be represented 
in an equivalent form as $\{ (X_k, T_k) \}$,
it follows that the density of $X$ in $[ 0, T_N ]$
may be obtained from the product of transition
densities of $\{ (Z_k, T_k ) \}$.  If $T > T_N$, an additional
term is required to obtain the density
of $X$ in $[0,T]$, as will be specified shortly.

Assuming a stationary bivariate Markov chain, the 
transition density of $\{ (Z_k, T_k) \}$ follows
from the density corresponding to  
\begin{align}
P(  Z(\tau) = (n,j),  T_1 \in [\tau, \tau + d\tau) \mid  Z(0) = (l,i) ),
\label{eq-cond-prob-dwell}
\end{align}
for $l \neq n$, which we
denote by $f_{ij}^{ln}(\tau)$.
Let $f^{ln}(\tau) = \{  f_{ij}^{ln}(\tau),~ i, j = 1, \ldots, r  \}$
denote the transition density matrix of $\{ (Z_k, T_k) \}$.
When $\tau$ does not coincide with a jump time
of the observable process, the transition probability 
\begin{align}
\bar{f}^l_{ij}(\tau) = P(  S(\tau)=j , T_1 > \tau \mid X(0)=l, S(0) = i)  \label{eq-fbar_l}
\end{align}
is also required.  
Let $\bar{f}^l(\tau) = \{ \bar{f}^l_{ij}(\tau), ~i,j = 1,\ldots, r \}$
denote the corresponding transition matrix. 
The following proposition
gives explicit forms for $f^{ln}(\tau)$ and $\bar{f}^l(\tau)$.

\begin{prop}\label{prop-fln}
For $\tau \geq 0$,
\begin{align}
	f^{ln}(\tau) = e^{H_{ll} \tau} H_{ln},~~l \neq n , \label{eq-f_ln}
\end{align}
and
\begin{align}
\bar{f}^l(\tau) = e^{H_{ll} \tau} . \label{eq-fbar_l2}
\end{align}
Furthermore, 
\begin{align}
 P( Z_{k+1} = (n,j) \mid Z_k = (l,i) ) = \left[ -H_{ll}^{-1} H_{ln} \right]_{ij} .
 \label{eq-TP-Zk}
\end{align}
\end{prop}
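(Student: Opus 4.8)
The plan is to derive the two density formulas directly from the generator by conditioning on the behavior of $Z$ inside the block $l$, and then obtain the transition-probability formula \eqref{eq-TP-Zk} by integrating \eqref{eq-f_ln} over all dwell times. First I would fix the observable state $l$ and consider the event $\{T_1 > \tau\}$, i.e.\ $X$ has not jumped out of $l$ by time $\tau$. On this event the joint process $Z$ has remained in the set of states $\{(l,1),\dots,(l,r)\}$, so its evolution is governed solely by the diagonal block $H_{ll}$ acting as a (sub-)generator on that $r$-dimensional coordinate set: the exit rates out of $\{(l,i)\}$ to other blocks are exactly the row deficits of $H_{ll}$. A standard backward-equation / Dynkin argument then shows that the sub-stochastic matrix $\bar f^l(\tau)$ with entries $P(S(\tau)=j,\,T_1>\tau\mid X(0)=l,S(0)=i)$ satisfies $\frac{d}{d\tau}\bar f^l(\tau) = \bar f^l(\tau) H_{ll}$ with $\bar f^l(0)=I$, whose unique solution is $e^{H_{ll}\tau}$, giving \eqref{eq-fbar_l2}.

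Next, for \eqref{eq-f_ln} I would write $f^{ln}_{ij}(\tau)\,d\tau$ as the probability that $Z$ stays in block $l$ up to time $\tau$, is in internal state $j$ at time $\tau^-$, and then in $[\tau,\tau+d\tau)$ makes a jump taking the observable coordinate from $l$ to $n$ (landing in underlying state $j$). By the Markov property this factorizes as $\big(\bar f^l(\tau)\big)_{ij'}$ summed against the instantaneous jump rates $h_{ln}(j'j)$, i.e.\ $f^{ln}(\tau) = \bar f^l(\tau) H_{ln} = e^{H_{ll}\tau} H_{ln}$. One must be slightly careful that $H_{ln}$ here is an off-diagonal block, so all its entries are genuine nonnegative rates and the product has the right probabilistic meaning; the ordering convention ($X_k$ is the state \emph{after} the $k$-th jump) matches this cleanly.

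Finally, \eqref{eq-TP-Zk} follows by marginalizing out the dwell time: $P(Z_{k+1}=(n,j)\mid Z_k=(l,i)) = \int_0^\infty f^{ln}_{ij}(\tau)\,d\tau = \Big[\int_0^\infty e^{H_{ll}\tau}\,d\tau\; H_{ln}\Big]_{ij}$. Here I would invoke \eqref{eq-H_ll-nonsingular}: since $H_{ll}$ is strictly diagonally dominant with negative diagonal, it is nonsingular and all its eigenvalues have strictly negative real part, so $e^{H_{ll}\tau}\to 0$ as $\tau\to\infty$ and $\int_0^\infty e^{H_{ll}\tau}\,d\tau = -H_{ll}^{-1}$. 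Substituting gives $[-H_{ll}^{-1}H_{ln}]_{ij}$, as claimed.

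The main obstacle is the rigorous justification of the differential equation for $\bar f^l(\tau)$ — that restricting attention to the sub-collection of states in block $l$ and the event of no observable jump really does produce the sub-generator $H_{ll}$, including handling simultaneous jumps of $X$ and $S$ correctly (a jump that changes $i\to j$ within block $l$ is an ordinary transition, whereas any transition to block $n\neq l$ is an exit, even if it simultaneously changes the underlying state). This is a routine but slightly delicate application of the strong Markov property already invoked in the text, together with the fact (cited earlier via \cite{albert:1962}) that sample paths have finitely many jumps in finite intervals, which rules out pathologies; everything else is bookkeeping and the elementary matrix-exponential integral.
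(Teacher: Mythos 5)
Your proposal is correct, but it reaches the two density formulas by a somewhat different route than the paper. The paper works in the opposite order: it first writes a first-jump (renewal-type) integral equation for $f^{ln}_{ij}(\tau)$ by conditioning on whether the chain exits block $l$ immediately or first makes an internal jump within the block, differentiates to obtain the backward ODE $\frac{d}{d\tau}f^{ln}(\tau)=H_{ll}f^{ln}(\tau)$ with initial condition $f^{ln}(0)=H_{ln}$ (whence \eqref{eq-f_ln}), and then obtains \eqref{eq-fbar_l2} as a corollary by noting that $\bar f^l$ satisfies the same equation with the first term replaced by $e^{h_{ll}(ii)\tau}\delta_{ij}$, i.e.\ initial condition $I$. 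You instead establish \eqref{eq-fbar_l2} first, viewing $\bar f^l$ as the transition function of the chain killed upon leaving block $l$, whose sub-generator is $H_{ll}$, and then get \eqref{eq-f_ln} by a last-jump factorization $f^{ln}(\tau)=\bar f^l(\tau)H_{ln}$ rather than by solving an ODE for $f^{ln}$ itself; this is more directly probabilistic and avoids the integral equation, at the cost of having to justify the killed-process Kolmogorov equation (which you, like the paper with its cited references, treat as standard). Your derivation of \eqref{eq-TP-Zk} is also slightly more complete than the paper's one-line ``integrate from $0$ to $\infty$'': you explicitly use the strict diagonal dominance \eqref{eq-H_ll-nonsingular} to place the eigenvalues of $H_{ll}$ in the open left half-plane so that $\int_0^\infty e^{H_{ll}\tau}\,d\tau=-H_{ll}^{-1}$, a point the paper leaves implicit. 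One small slip in your prose: you say the chain is ``in internal state $j$ at time $\tau^-$'' before the exit jump, which would exclude simultaneous jumps of $X$ and $S$; your displayed factorization, summing $\bigl(\bar f^l(\tau)\bigr)_{ij'}h_{ln}(j'j)$ over $j'$, is the correct statement and is consistent with your later remark about simultaneous jumps, so this is only a wording issue, not a gap.
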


\begin{proof}
Following an argument similar to that given in \cite{freed:1982,ephraim:2009},
the density $f_{ij}^{ln}(t)$ satisfies the following
equation:
\begin{align}
f_{ij}^{ln}(\tau) 
= h_{ln}(ij) e^{h_{ll}(ii) \tau} + 
e^{h_{ll}(ii) \tau} \! \int_0^\tau \! e^{-h_{ll}(ii) t }
  			\sum_{k \neq i} h_{ll}(ik) f_{kj}^{ln} (t) dt  . \label{eq-f-ij}
\end{align}
Differentiating both sides of \eqref{eq-f-ij} with respect to $t$
and simplifying, it follows that
\begin{align}
\frac{d f_{ij}^{ln}(\tau)}{d\tau}
  = h_{ll}(ii) f_{ij}^{ln}(\tau) + \sum_{k \neq i} h_{ll}(ik) f_{kj}^{ln}(\tau) 
  = \sum_k h_{ll}(ik) f_{kj}^{ln}(t\tau) .
\end{align}
Therefore,
\begin{align}
	\frac{d f^{ln}(\tau)}{d\tau} = H_{ll} f^{ln}(\tau)
\end{align}
with initial condition $f^{ln}(0) = H_{ln}$ from \eqref{eq-f-ij}.  Hence,
\eqref{eq-f_ln} follows.  Integrating \eqref{eq-f_ln} from $0$ to $\infty$
gives \eqref{eq-TP-Zk}.

The transition
probability $\bar{f}^l_{ij}(\tau)$ satisfies an equation
similar to \eqref{eq-f-ij} except that the first term on
the right-hand side is replaced by $e^{h_{ll}(ii)\tau} \delta_{ij}$.
This change only affects the initial condition, viz.,
$\bar{f}^l(0) = I$, where $I$ denotes the
identity matrix.  Hence, \eqref{eq-fbar_l2} follows.
\end{proof}

To simplify notation in the sequel, we shall use $P( \cdot )$ to denote not only
a probability measure, but also a density, as appropriate (cf.\ \cite{ryden:1996,ephraim:2009}).  
The exact meaning of expressions involving $P (\cdot)$ should be clear from the
context.  In particular, all null probabilities are to be 
interpreted in the density sense. 
The density of $X$ in $[0, T]$ depends on the initial
state probabilities $\mu_{x_0}(i) = P(X_0 = x_0, S_0 = i)$, $i=1, \ldots, r$.
Let
\begin{align}
\mu_{x_0} = \{ \mu_{x_0}(1) , \mu_{x_0}(2) , \ldots, \mu_{x_0}(r) \} 
\end{align}
denote the initial state distribution.
Using the Markov renewal property of $\{ (Z_k, T_k) \}$, the density
of $X$ in $[0,T]$ can be expressed as
\begin{align}
P( X(t), 0 \leq t \leq T ) = 
 \mu_{x_0}  \left \{ \prod_{k=1}^{N} f^{x_{k-1} x_{k}} (\Delta t_k) \right \}
	\bar{f}^{x_{N}}(T - t_N) \bone  ,
\label{eq-path-density}
\end{align}
where $\bone$ denotes a column vector of all ones.
This expression will be used in Section~\ref{sec:EM}
to develop the EM recursions.

\subsection{Forward-backward recursions}

The density in \eqref{eq-path-density} can be evaluated using forward
and backward recursions. 
The {\em forward density} is defined by the row vector
\begin{align}
L(k) &= \{ P( X(t), 0 \leq t \leq t_k, S_k = i), ~i = 1, \ldots, r   \} , 
\label{eq-L-def}
\end{align}
for $k=0, 1, \ldots, N$.
The forward recursion is given by
\begin{align}
L(0) &= \mu_{x_0} , \nonumber \\
L(k) &= L(k-1) f^{x_{k-1} x_{k}}(\Delta t_k).  \label{eq-forward}
\end{align}
The {\em backward density} is defined by the column vector
\begin{align*}
R(k) &= \{ P ( X(t), t_{k-1} < t \leq T \mid X_{k-1} = x_{k-1}, S_{k-1} = i), ~~ i = 1,\ldots, r \}^\prime,
\end{align*}
for $k = N \! + \! 1, N, \ldots, 1$, where $^\prime$ denote matrix transpose.
The backward recursion is given by
\begin{align}
R(N+1) &=  \bar{f}^{x_{N}}(T - t_N) \bone,  \nonumber \\
R(k) &= f^{x_{k-1} x_{k}}(\Delta t_{k}) R(k+1) .
\label{eq-backward}
\end{align}
From \eqref{eq-path-density}--\eqref{eq-backward}, the density of the observable process 
in $[0, T]$ is given by
\begin{align}
P(X(t), 0 \leq t \leq T) = L(k) R(k+1),~~~~ k= 0, \ldots, N . \label{eq-Pbx}
\end{align}

To ensure numerical stability, it is necessary to scale the
above recursions.  Using an approach similar
to that developed in \cite{roberts:2006},
the scaled forward recursion is given by
\begin{align}
\tL(0) &= \mu_{x_0}, \nonumber \\
\tL(k) &= \frac{\tL(k-1) f^{x_{k-1} x_{k}}(\Delta t_k)}
			   {c_k}, ~k=1, \ldots, N , \label{eq-sforward}
\end{align}
where 
\begin{align}
c_k \triangleq \tL(k-1) f^{x_{k-1} x_{k}}(\Delta t_k) \bone ,~~~k = 1, \ldots, N.
\end{align}
The scaled backward recursion is given by
\begin{align}
\tR(N \! + \! 1) & \! = \! \bar{f}^{x_{N}}(T - t_N) \bone, \nonumber \\
\tR(k) & \! = \!  \frac{f^{x_{k \! - \! 1} x_{k}}(\Delta t_k) \tR(k \! + \! 1)}
			   {c_k}, ~k=1, \ldots, N . \label{eq-sbackward}
\end{align}
Clearly, $\tL(0) = L(0)$ and $\tR(N \! + \! 1) = R(N \! + \! 1)$.
For $k=1, \ldots, N$, one can show straightforwardly that the scaled and unscaled
iterates of the forward and backward recursions are related by
\begin{align}
\tL(k) = \frac{L(k)}{\prod_{m=1}^k c_m} ~\mbox{and}~
\tR(k) = \frac{R(k)}{\prod_{m=k}^{N} c_m}  .
\label{eq-scaled-LR}
\end{align}
From \eqref{eq-scaled-LR} and \eqref{eq-L-def}, one sees that 
for $k = 1, \ldots, N$, the scaled forward vector $\tL(k)$
can be interpreted as the probability distribution of the
underlying process $S$ at time $T_k$
conditioned on the observable sample path
up to and including time $t_k$:
\begin{align}
\tL(k) = \left \{  P( S_k = i \mid X(t), 0 \leq t < t_k  ), i = 1, \ldots, r \right \} .
\end{align}
Thus, the density of $X$ up to and including the $N$th jump can be expressed
as the product of the scaling constants as follows:
\begin{align}
P(X(t), 0 \leq t \leq t_N ) =  L(N) \bone =   
 \left ( \prod_{k=1}^{N} c_k \right ) \tL(N) \bone =  
 \prod_{k=1}^{N} c_k   .  \label{eq-PX-ck}
\end{align}
Therefore, the log-likelihood of the observed sample path is
given by
\begin{align}
 \mathcal{L} = \sum_{k=1}^N \log c_k  .
\label{eq-loglikelihood}
\end{align}

The above forward and backward recursions can be generalized
to apply to any time $t$ between jump times of the 
observable process.  In particular, for
$t \in [t_k, t_{k+1})$, consider the row vector
\begin{align}
\tilde{\ell}(t) &= \left \{
 P( S(t) = i \mid X(\tau), 0 \leq \tau \leq t ),  ~ i = 1, \ldots, r   \right \} .
\label{eq-l-def}
\end{align}
It follows that
\begin{align}
\tilde{\ell}(t) = \frac{L(k) \bar{f}^{x_k}(t - t_k)}{L(k) \bar{f}^{x_k}(t - t_k) \bone} .
\end{align}
A similar result was derived by~\cite{rudemo:1973}.

\subsection{Properties}
\label{ss:dwell_time}

The observable process $X$ is conditionally Markov given $S$,
and vice versa. In contrast to the MMMP and BMAP (see
Section~\ref{ss:models}),
the underlying process $S$ of the bivariate Markov chain $Z$
need not be Markov. 
A necessary and sufficient
condition for $S$ to be a (homogeneous) Markov chain is
that there exists a matrix $Q$ satisfying~\cite[Theorem 3.1]{ball:1993}
\begin{align}
Q = \sum_{n=1}^d H_{ln} \label{eq-S-ctmc}
\end{align}
for $l=1,\ldots, d$.  
In this case, $Q$ is the generator of $S$.  

Various statistics of $X$ and $S$ can be expressed in terms of
the stationary distribution of $Z$.  We denote this
distribution by $\pi = \{ \pi_{nj} \}$, 
where $\pi_{nj} = \lim_{t \rightarrow \infty} P(Z(t) = (n,j) \mid Z(0) = (l,i) )$
and the set of joint states $\{ (n, j) \}$
is ordered lexicographically.
The vector $\pi$ is the unique solution to
the following system:
\begin{align}
 \pi H = \bzero,~~~~ \pi \bone = 1. \label{eq-pi}
\end{align}

The process $\{ Z_k \}$ is a homogeneous discrete-time Markov chain
with transition probabilities given by \eqref{eq-TP-Zk} in
Proposition~\ref{prop-fln}.  Define the $r \times r$ matrix
$A_{ln} = - H_{ll}^{-1} H_{ln}$, for
$\{ l, n: l \neq n \in \{ 1, \ldots, n \}$.  For  $l = n$,
let $A_{ll}$ be a matrix of all zeros.  
The transition matrix of $\{ Z_k \}$ is given by $A = \{ A_{ln} \}$.
Let $D_H = \mbox{diag} \{ H_{ll}, l = 1,\ldots, d \}$.
It follows that $A = - D_H^{-1} H + I$.
Let $\nu = \{ \nu_{nj} \}$ denote the stationary distribution of $\{ Z_k \}$,
where $\nu_{nj} = \lim_{k \rightarrow \infty} P(Z_k = (n,j) \mid Z_0 = (l,i))$.
The vector $\nu$ is the unique solution to the following
system:
\begin{align}
 \nu A = \nu, ~~~~ \nu \bone = 1. \label{eq-nu}
\end{align}
Then $\nu$ can be related to $\pi$ as follows (cf.\ \cite[(6)]{fischer:1992}):
\begin{align}
\nu = \frac{\pi D_H}{\pi D_H \bone } .
\end{align}

The dwell time of the observable process $X$ in a given state
has a phase-type
distribution \cite[Chapter 2]{neuts:1981}.  The phase-type
distribution generalizes mixtures and convolutions
of exponential distributions and can be used to
approximate a large class of dwell time distributions.
To state this property formally, we denote the 
conditional density of the $k$th dwell time $\Delta T_k$
given that $X_{k-1} = l$ by $f_{\Delta T_k \mid X_{k-1}}(\tau \mid l)$.
Define $\alpha_{li} = \lim_{k \rightarrow \infty} P(S_k = i \mid X_k = l)$
and let $\alpha_l = \{ \alpha_{li}, i = 1, \ldots, r \}$.
The conditional probability $\alpha_{li}$ 
can be expressed in terms of $\nu$ as follows:
\begin{align}
\alpha_{li} = \frac{\nu_{li}}{\sum_i \nu_{li}} . \label{eq-alpha-li}
\end{align}
We then have the following proposition, which is proved 
in \ref{sec-app-dwell}.  

\begin{prop} \label{prop-phase}
The stationary conditional distribution of $\Delta T_k$ given 
$X_{k-1} = l$ is phase-type.  In particular,
\begin{align}
\lim_{k \rightarrow \infty} f_{\Delta T_k \mid X_{k-1}} (\tau \mid l) =
	\alpha_l e^{H_{ll} \tau} \beta_l ,
	 \label{eq-pht}
\end{align}
where $\beta_l = \sum_{n:n \neq l} H_{ln} \bone$.
\end{prop}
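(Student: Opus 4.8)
The plan is to compute the stationary conditional density of $\Delta T_k$ given $X_{k-1}=l$ by conditioning further on the underlying state $S_{k-1}=i$, using the transition density matrix $f^{ln}(\tau)$ from Proposition~\ref{prop-fln}, and then summing over $n\neq l$ and averaging over $i$ with the stationary weights $\alpha_{li}$. Concretely, I would start from
\begin{align*}
f_{\Delta T_k \mid X_{k-1}}(\tau \mid l) = \sum_{i=1}^r P(S_{k-1}=i \mid X_{k-1}=l)\, f_{\Delta T_k \mid Z_{k-1}}(\tau \mid (l,i)),
\end{align*}
where the dwell-time density given the full joint state $Z_{k-1}=(l,i)$ is obtained by summing the joint density of $(\Delta T_k, S_k = j, X_k = n)$ over all terminal states $(n,j)$ with $n\neq l$ (the observable process must jump). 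That joint density is exactly $f^{ln}_{ij}(\tau)$ as defined in \eqref{eq-cond-prob-dwell}, so summing over $j$ and over $n\neq l$ gives $\sum_{n\neq l} f^{ln}(\tau)\bone$ evaluated at row $i$.

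Next I would substitute the closed form $f^{ln}(\tau) = e^{H_{ll}\tau}H_{ln}$ from \eqref{eq-f_ln}, so that $\sum_{n\neq l} f^{ln}(\tau)\bone = e^{H_{ll}\tau}\sum_{n\neq l}H_{ln}\bone = e^{H_{ll}\tau}\beta_l$, with $\beta_l$ as defined in the statement. Taking the $k\to\infty$ limit replaces $P(S_{k-1}=i\mid X_{k-1}=l)$ by $\alpha_{li} = \lim_k P(S_k=i\mid X_k=l)$; here one must check that conditioning on $X_{k-1}$ rather than $X_k$ does not matter in the limit, which follows because $\{X_k\}$ (equivalently $\{Z_k\}$) is a stationary ergodic chain under the stationary initial distribution, so the one-step index shift is immaterial. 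Assembling the pieces yields
\begin{align*}
\lim_{k\to\infty} f_{\Delta T_k \mid X_{k-1}}(\tau\mid l) = \sum_{i=1}^r \alpha_{li}\,[e^{H_{ll}\tau}\beta_l]_i = \alpha_l e^{H_{ll}\tau}\beta_l,
\end{align*}
which is \eqref{eq-pht}. Recognizing this as a phase-type density is then a matter of noting that $H_{ll}$ is a nonsingular subgenerator (strictly diagonally dominant by \eqref{eq-H_ll-nonsingular}), $\alpha_l$ is a sub-probability row vector, $\beta_l = -H_{ll}\bone$ is the corresponding exit-rate vector, and $\alpha_l\bone = 1$ by \eqref{eq-alpha-li}, matching the standard form in \cite[Chapter 2]{neuts:1981}.

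The main obstacle I expect is the careful justification of the limiting step — specifically, that $\lim_k P(S_{k-1}=i\mid X_{k-1}=l)$ exists and equals $\alpha_{li}$, and that the dwell-time density given $Z_{k-1}$ depends only on $Z_{k-1}$ and not on $k$. The latter is the time-homogeneity of the Markov renewal process $\{(Z_k,T_k)\}$ established earlier in Section~\ref{subsec-pathdensity}, and the former follows from irreducibility of $Z$, which forces $\{Z_k\}$ to be an irreducible (aperiodic) finite chain with a unique stationary distribution $\nu$; the conditional limit is then $\nu_{li}/\sum_i\nu_{li}$, exactly \eqref{eq-alpha-li}. Everything else is a routine substitution of the formula from Proposition~\ref{prop-fln}, so the proof is short once these convergence points are pinned down.
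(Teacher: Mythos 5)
Your proposal is correct and follows essentially the same route as the paper's own proof: sum the transition density $f^{ln}(\tau)=e^{H_{ll}\tau}H_{ln}$ over terminal states $(n,j)$ with $n\neq l$ to get $[e^{H_{ll}\tau}\beta_l]_i$, average over $i$ via the law of total probability, take $k\to\infty$ to replace the conditional weights by $\alpha_l$, and identify the result as a phase-type density with representation $(\alpha_l,H_{ll})$. Your extra care about the index shift ($S_{k-1}$ versus $S_k$) and the existence of the limit is a welcome refinement of a step the paper treats implicitly, but it does not change the argument.
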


The phase-type
dwell time of the observable process
may be useful in explicit durational modeling
embedded in a hidden Markov model system (cf.\ \cite{ferguson:1980,sohn:2007,yu:2006}).
It should be clear that the dwell time of the underlying process $S$ 
also has a phase-type distribution.  This may be
useful in certain applications for which the
underlying process has a non-Markovian character.

\subsection{Relation to other models}
\label{ss:models}

In this section, we relate the continuous-time bivariate Markov chain to 
other widely used stochastic models mentioned in Section~\ref{sec-intro}. 

The MMMP (cf.\ \cite{ephraim:2009}) is a bivariate Markov chain 
$(X, S)$ for which the underlying
process $S$ is a homogeneous irreducible Markov chain.
The observable process $X$, conditioned on $S$,
is a nonhomogeneous irreducible Markov chain.
The generator of $S$ satisfies $Q = \sum_{n=1}^d H_{ln}$,
for all $l$, and the submatrices
$\{ H_{ln}, l \neq n \}$ of the bivariate generator
matrix $H$ are all diagonal matrices.  This implies that
with probability one, the underlying and observable
chains do not jump simultaneously.
Conditioned on $S(t)=i$,
the generator of the observable process is given by
$G_i = \{ h_{ln}(ii),~ l, n = 1, \ldots, d \}$.  Thus,
the MMMP may be parameterized by $\{ Q, G_1, \ldots, G_r \}$.
The number of independent scalar values that constitute
the parameter of an MMMP is at most $r(r-1) + rd(d-1)$.

The BMAP (cf.\ \cite{lucantoni:1991,klemm:2003}) is a bivariate Markov chain $(N, S)$
for which the underlying process $S$ is a homogeneous
irreducible finite state Markov chain, and the observable process $N$ is
a counting process with state space $\{ 0, 1, 2, \ldots, ... \}$. The size
of each jump of $N$, i.e., the batch size, lies in $\{ 1, 2, \ldots, d-1 \}$.  
The generator of a BMAP is an upper triangular
block Toeplitz matrix with first row given by
$\{ D_0, D_1, \ldots, D_{d-1}, 0, 0, \ldots \}$.
The generator $Q$ of the underlying chain $S$
satisfies $Q = \sum_{m=0}^{d-1} D_m$.
The BMAP becomes a Markovian arrival process (MAP) when 
$d=1$, i.e., the observable process can jump by at most
one.  For this model, we only have $D_0$ and $D_1$.
The MAP in turn becomes an MMPP (cf.\ \cite{fischer:1992,ryden:1994,ryden:1996,roberts:2006}) 
when $D_1$ is a diagonal matrix with the corresponding
Poisson rates along the main diagonal.
In contrast to the MMMP and MMPP, the observable
and underlying chains of the BMAP and MAP can 
jump simultaneously.

The BMAP can be represented in the framework
of this paper using a finite-state bivariate Markov chain 
$Z = (X, S)$ where $X$ is defined by
$X(t) = (N(t) ~\mbox{mod}~ d) + 1$;
i.e., the observable process $X$ records the modulo-$d$ counts
of the counting process $N$ of the BMAP.  In this case, the generator of $Z$
is given by $H = \{ H_{ln}, ~l, n = 1, \ldots, d \}$, where
\begin{align}
H_{ln} \triangleq D_{l - n ~\mbox{mod}~ d}
\end{align}
and $H$ is a block circulant matrix.
The number of independent scalar values that constitute
the parameter of the BMAP is at most $r^2 d - r$.  

\begin{figure}
\centerline{
\includegraphics[scale=0.65, clip=true]{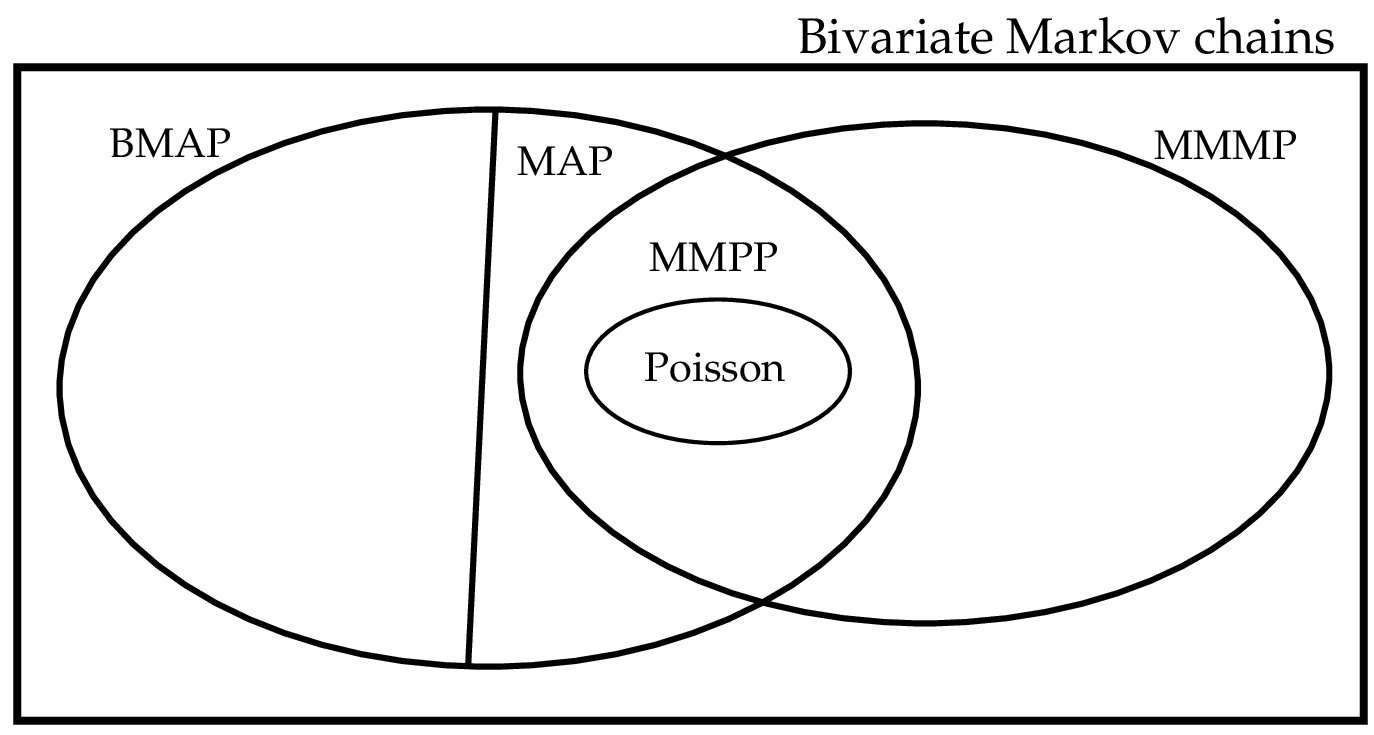}
}
\caption{Relationships among various
 bivariate Markov chains.}
 \label{fig-venn}
\end{figure} 
 
The bivariate Markov chain may also be seen as a hidden Markov process
where $\{ Z_k \}$ plays the role of the 
underlying Markov chain, and the 
observations are continuous random variables
given by $\{ \Delta T_k \}$ \cite{ryden:1996}.  
The conditional density of each observation $\Delta T_k$
depends on both $Z_{k-1}$ and $Z_k$, which
follows from \eqref{eq-f_ln}.
A review of hidden Markov processes may be found in
\cite{ephraim:2002}.  The EM algorithm
developed in 
Section~\ref{sec:EM} is applicable to any of the above particular cases,
as well as multivariate Markov chains (see \cite{ball:1993}).

\section{EM Algorithm}
\label{sec:EM}

In this section, we describe an EM algorithm for
ML estimation of the parameter of a bivariate chain, denoted by $\phi^0$,
given the sample path of the observable process in the interval $[0, T]$.
In the EM approach, a new parameter estimate, say
$\phi_{\iota+1}$, is obtained from a given parameter estimate,
say $\phi_\iota$, as follows:
\begin{align}
\phi_{\iota+1} \! = \! \arg \max_\phi E
\{ \log P ( \{ Z(t), 0 \leq t \leq T \} ; \phi ) \mid 
X(t), 0 \leq t \leq T; \phi_\iota  \} ,
\label{eq-EM}
\end{align}
where the expectation is
taken over $\{ S(t), 0 \leq t \leq T \}$ given 
the observable sample path $\{ X(t), 0 \leq t \leq T \}$.  The maximization is over $\phi$,
which consists of the off-diagonal elements of the
bivariate generator $H$.  
The density of a univariate Markov chain was derived
in \cite{albert:1962}.  A similar approach can be used to
derive the density of the bivariate Markov chain
$\{ Z(t), 0 \leq t \leq T \}$ which is required in \eqref{eq-EM}.
The resulting log-density is expressed in terms of  
the number of jumps $m_{ij}^{ln}$ from each state $(l,i)$
to any other state $(n,j)$ and 
the dwell time $D_i^l$ in each state $(l,i)$.

Let $\varphi_{li}(t) = I( Z(t) = (l,i) )$, where $I(\cdot)$
denotes the indicator function, and let $\#$ denote
set cardinality.  Then,
\begin{align}
m_{ij}^{ln} &= \# \{ t: 0 \! < \!   t \!  \leq \! T, Z(t-) \! = \! (l,i),
	Z(t) \! = \! (n,j) \}  \nonumber \\
 &= \sum_{t \in [0, T]} \varphi_{li}(t-) \varphi_{nj}(t) ,
\label{eq-mijln} \\
D_{i}^l &= \int_0^T \varphi_{li}(t)  dt , \label{eq-Dil} 
\end{align}	
where the sum in \eqref{eq-mijln} is over the jump points of $Z(t)$.
The conditional mean in \eqref{eq-EM} involves the conditional mean estimates
\begin{align}	
\hm_{ij}^{ln} &= E \{ m_{ij}^{ln} \mid X(t), 0 \leq t \leq T  \}  , ~~\mbox{and}~ \label{eq-cme-m} \\
\hD_i^l &= E \{ D_i^l \mid X(t), 0 \leq t \leq T \} ,\label{eq-cme-D} 
\end{align}
where the dependency on $\phi_\iota$ is suppressed.

The maximization in \eqref{eq-EM}
 yields the following intuitive estimate in
the $\iota + 1$st iteration of the EM algorithm \cite{albert:1962}:
\begin{align}
\hh_{ln}(ij) = \frac{\hm_{ij}^{ln}}{\hD_i^l},~~~ (l,i) \neq (n,j).
\end{align}
Next, we develop closed-form expressions for the estimates 
$\hm_{ij}^{ln}$ and $\hD_i^l$.

\subsection{Number of jumps estimate}

The conditional
expectation of $m_{ij}^{ln}$ in \eqref{eq-mijln} is given by
\begin{align}
\hm_{ij}^{ln} &= \sum_{t \in [0,T]} P( Z(t-) = (l,i), Z(t) = (n,j) \mid 
 X(\tau), 0 \leq \tau \leq T ) .
\label{eq-m-sum-jump-Z}
\end{align}
To further evaluate this expression, 
we consider two cases: 1) $l=n$, $i \neq j$; and 2) $l \neq n$.

\vspace{1em}
\noindent
\underline {\em Case 1)}:  ($l=n$, $i \neq j$)
\vspace{0.5em}

In this case, the sum in \eqref{eq-m-sum-jump-Z}
is over jumps of the underlying process~$S$
from $i$ to $j$ while the observable chain~$X$ remains in
state~$l$.
The estimate in \eqref{eq-m-sum-jump-Z} can be written as
a Riemann integral
by partitioning the interval $[0,T]$
into $\oN$ subintervals of length $\Delta$ such that
$\oN \Delta = T$ and then taking the limit
as $\Delta$ approaches zero:
\begin{align}
\hm_{ij}^{ll} 
 &= \lim_{\Delta \rightarrow 0} \sum_{k = 1}^{\oN} \Delta \cdot
 		\frac{P( Z((k-1) \Delta) = (l,i) , Z(k \Delta ) = (l,j) \mid
 		 X(t), 0 \leq \tau \leq T )}{\Delta} \nonumber \\
&= \int_0^T P( Z(t-)  =  (l, i), Z(t) = (l, j) \mid X(\tau), 0 \leq \tau \leq T ) dt .
 \label{eq-numjumps}
\end{align}
In \eqref{eq-numjumps},
$P( Z(t-)=(l,i),  Z(t)=(l, j) \mid X(\tau), 0 \leq \tau \leq T )$
denotes the conditional density given $X(\tau), 0 \leq \tau \leq T$, of a jump of $Z$
from $(l,i)$ to $(l,j)$ at time $t$.  A result similar to \eqref{eq-numjumps} 
was originally stated in~\cite{albert:1962,asmussen:1996,ryden:1996}.
A detailed proof was provided in~\cite{albert:1962}, in
the context of estimating finite-state Markov chains,
and in~\cite{asmussen:1996}, in the 
context of estimating phase-type distributions. The proof was 
adapted in~\cite{ephraim:2009} for estimating MMMPs. 

We have the following proposition, which is stated
for the case when $T = t_N$.
\begin{prop}  \label{prop-njumps-1}
For $k=0, \ldots, N-1$,
define the $2r \times 2r$ matrix
\begin{align}
C_k = \left [
  \begin{array}{cc}
     H_{x_k x_k}	&  H_{x_k x_{k+1}} \tR(k+2) \tL(k) \\
     0				&	  H_{x_k x_k} 
   \end{array}
\right ] .
\end{align}
Let $\cI_k$ be the $r \times r$ upper right block
of the matrix exponential $e^{C_k \Delta t_{k+1}}$, denoted by
\begin{align}
 \cI_k = \left [ e^{C_k \Delta t_{k+1}} \right ]_{12} .
 \label{eq-Ik-comp}
\end{align}
Then,
\begin{align}
\hm_{ij}^{ll} & = 
\left [ H_{ll} \odot \sum_{k:x_k = l} \frac{\cI_k^\prime}{c_{k+1}} \right ]_{ij}
\label{eq-m-ll-ij}
\end{align}
where $\odot$ denotes element-by-element matrix multiplication.
\end{prop}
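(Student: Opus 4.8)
The plan is to start from the Riemann-integral representation \eqref{eq-numjumps} and reduce the computation of $\hm_{ij}^{ll}$ to a sum of integrals of matrix exponentials over the dwell intervals $[t_k, t_{k+1})$, each of which is then evaluated by Van Loan's formula. First I would observe that a jump of $Z$ from $(l,i)$ to $(l,j)$ at time $t$ can only occur while $X$ is in a constant state, so the only $t$ that contribute to the integral in \eqref{eq-numjumps} are those in the interior of dwell intervals $[t_k,t_{k+1})$ with $x_k = l$. On each such interval I would write, for $t \in (t_k, t_{k+1})$,
\begin{align*}
P( Z(t-) = (l,i),\, Z(t) = (l,j) \mid X(\tau),\, 0 \leq \tau \leq T )
 = \frac{ L(k)\, [e^{H_{ll}(t - t_k)}]_{\cdot i}\, h_{ll}(ij)\, [e^{H_{ll}(t_{k+1} - t)}]_{j \cdot}\, R(k+2) }{ P(X(\tau),\, 0 \leq \tau \leq T) },
\end{align*}
using the forward density $L(k)$ up to $t_k$, the propagation $\bar f^{l}(t-t_k) = e^{H_{ll}(t-t_k)}$ of the underlying chain inside the dwell, the infinitesimal jump rate $h_{ll}(ij)$ at $t$, the propagation $e^{H_{ll}(t_{k+1}-t)}$ to the end of the dwell, the transition $H_{l x_{k+1}}$ at $t_{k+1}$ absorbed into $R(k+2)$, and the backward density $R(k+2)$ thereafter; here I invoke Proposition~\ref{prop-fln} for the forms $f^{ln}$ and $\bar f^{l}$ and the Markov renewal structure for the factorization.

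Next I would integrate over $t \in (t_k, t_{k+1})$. Pulling out $h_{ll}(ij)$, the remaining integral has the shape $\int_0^{\Delta t_{k+1}} e^{H_{ll} u}\, E_{ij}\, e^{H_{ll}(\Delta t_{k+1} - u)}\, du$ sandwiched between the row vector $L(k)$ and the column vector $R(k+2)$, where $E_{ij}$ is the matrix with a single $1$ in position $(i,j)$. Summing the contribution over all $i,j$ and organizing into a matrix identity, the rank-one object $R(k+2) L(k)$ appears, and the standard Van Loan identity \cite{vanloan:1978} says that for the block matrix
\[
C_k = \begin{bmatrix} H_{ll} & H_{l x_{k+1}} R(k+2) L(k) \\ 0 & H_{ll} \end{bmatrix},
\]
the upper-right block of $e^{C_k \Delta t_{k+1}}$ equals $\int_0^{\Delta t_{k+1}} e^{H_{ll} u}\, H_{l x_{k+1}} R(k+2) L(k)\, e^{H_{ll}(\Delta t_{k+1}-u)}\, du$. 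Identifying this with $\cI_k$ from \eqref{eq-Ik-comp}, and matching indices carefully (the transpose and the element-wise product with $H_{ll}$ in \eqref{eq-m-ll-ij} arise because $\hm_{ij}^{ll}$ couples the $(j,\cdot)$ component of the left factor with the $(\cdot,i)$ component of the right factor, and because $h_{ll}(ij)$ multiplies entrywise rather than by matrix multiplication), I would obtain the sum $\sum_{k: x_k = l} \cI_k$ up to normalization. Finally I would pass from the unscaled $L(k), R(k+2)$ to the scaled $\tL(k), \tR(k+2)$ using \eqref{eq-scaled-LR}: the product of all scaling constants $\prod_{m=1}^N c_m$ cancels the denominator $P(X(\tau), 0 \leq \tau \leq T)$ from \eqref{eq-PX-ck}, leaving exactly the single factor $1/c_{k+1}$ in front of $\cI_k$ as in \eqref{eq-m-ll-ij}.

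The main obstacle I expect is the bookkeeping around the rank-one insertion $R(k+2)\tL(k)$ versus $\tR(k+2)\tL(k)$ and the precise placement of transposes and the Hadamard product: one has to be meticulous that $\cI_k$, as defined via the block $C_k$ with the factor $H_{x_k x_{k+1}}\tR(k+2)\tL(k)$ in the corner, reproduces after transposition exactly the matrix whose $(i,j)$ entry, multiplied entrywise by $h_{ll}(ij)$, equals the integrated conditional jump density — getting the direction of the vector outer product and the side on which $H_{x_k x_{k+1}}$ sits wrong would produce a transposed or mismatched answer. A secondary technical point is justifying the interchange of the limit $\Delta \to 0$ (equivalently, the Riemann integral) with the conditional expectation and the decomposition of the path density; this is exactly the argument referenced from \cite{albert:1962,asmussen:1996,ephraim:2009}, so I would cite it rather than reprove it, and simply note that the right-continuous step-function nature of the sample paths (guaranteed by \cite[Theorem 2.1]{albert:1962}) makes the set of jump times negligible so that the dwell-interval decomposition is exhaustive.
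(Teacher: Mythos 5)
Your proposal is correct and follows essentially the same route as the paper's proof: the path-density factorization over the dwell interval, the rank-one insertion $\tR(k+2)\tL(k)$, Van Loan's identity for the upper-right block of $e^{C_k \Delta t_{k+1}}$, and the cancellation of scaling constants leaving the single factor $1/c_{k+1}$. The only slip is verbal: the jump factor $H_{x_k x_{k+1}}$ is not ``absorbed into $R(k+2)$'' (which is conditioned on the post-jump state at $t_{k+1}$) but must appear explicitly via $f^{x_k x_{k+1}}(t_{k+1}-t)=e^{H_{ll}(t_{k+1}-t)}H_{x_k x_{k+1}}$, exactly where you in fact place it in the Van Loan integrand, so your final formulas coincide with the paper's.
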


\begin{proof}
Let $\bone_i$ denote a column vector with a one in the $i$th
element and zeros elsewhere.  Suppose $t \in [t_k, t_{k+1})$
and $x_k = l$.
Applying \eqref{eq-forward}, \eqref{eq-backward}, 
\eqref{eq-PX-ck}
in that order we obtain
\begin{align}
& P(  Z(t-) = (l,i), Z(t) = (l, j) \mid X(\tau), 0 \leq \tau \leq T )  \nonumber \\
&= \frac{P( Z(t-) = (l,i), Z(t) = (l, j), X(\tau), 0 \leq \tau \leq T )}
    {P(X(\tau), 0 \leq \tau \leq T)}	   
    \nonumber \\
&= \frac{\left \{ \mu_{x_0}
 \prod_{m=1}^{k}  f^{x_{m \! - \! 1} x_{m}} (\Delta t_m)  \right \} }
 {P(X(\tau), 0 \leq \tau \leq T)}  
\bar{f}^{x_k}(t \! - \! t_{k}) \bone_i 
  h_{x_k x_k}(ij) \bone_j^\prime f^{x_k x_{k \! + \! 1}} (t_{k \! + \! 1} \! - \! t) 
  \nonumber \\
&~~~~~~ \cdot \left \{ \prod_{m=k+2}^N \! f^{x_{m \! - \! 1} x_{m}}(\Delta t_m) \right \}
 \bone  \nonumber \\
&= 
 \frac{h_{ll}(ij) L(k) }{P(X(\tau), 0 \leq \tau \leq T)}
  \bar{f}^{x_k}(t \! - \! t_{k}) \bone_i \bone_j^\prime
 f^{x_k x_{k \! + \! 1}}  \! (t_{k \! + \! 1}  \! - \! t)  R(k \! + \! 2) \nonumber \\
&= 
 \frac{h_{ll}(ij)  \tL(k)}{c_{k+1}} 
 \bar{f}^{x_k}(t \! - \! t_{k}) \bone_i \bone_j^\prime 
 f^{x_k x_{k \! + \! 1}}  (t_{k \! + \! 1} \! - \! t)  
 \tR(k \! + \! 2) \nonumber \\
&=   \frac{h_{ll}(ij)}{c_{k+1}}
  \left [ \! f^{x_k x_{k \! + \! 1}}  (t_{k \! + \! 1} \! - \! t)   \tR(k \! + \!2) 
\tL(k)  \bar{f}^{x_k}(t \! - \! t_{k})  \right ]_{ji} .
\label{eq-Pijl}
\end{align}
Substituting \eqref{eq-Pijl} into \eqref{eq-numjumps}, it follows that
\begin{align}
\hm_{ij}^{ll} & = 
 \sum_{k:x_k = l}
 \frac{h_{ll}(ij)}{c_{k+1}} 
  \left [  \int_{t_{k}}^{t_{k+1}} \! f^{x_k x_{k+1}} (t_{k+1}  \! - \!  t) 
  \tR(k \! + \! 2) \tL(k) \bar{f}^{x_k}(t \! -  \! t_{k})  dt \right ]_{ji} .
\label{eq-m-ll-ij-int}
\end{align}
Denoting the integral in the above expression by $\cI_k$ and using 
\eqref{eq-f_ln} and \eqref{eq-fbar_l2}, we obtain
\begin{align}
 \cI_k  =  \int_0^{\Delta t_{k \! + \! 1}} e^{H_{x_k x_k} (\Delta t_{k \! + \! 1}  - y)} 
   H_{x_k  x_{k \! + \! 1}} 
	\tR(k \! + \! 2)  \tL(k)  e^{H_{x_k x_k} y} dy . 
	\label{eq-Ik-exp}
\end{align}
The result \eqref{eq-m-ll-ij} now follows from
\eqref{eq-m-ll-ij-int} and \eqref{eq-Ik-exp}.
Following the approach of \cite{roberts:2006,ephraim:2009}, we apply the
result in \cite{vanloan:1978} to evaluate the integral in \eqref{eq-Ik-exp} and
obtain~\eqref{eq-Ik-comp}.
\end{proof}

\vspace{1em}
\noindent
\underline{\em Case 2)}:  ($l \neq n$)
\vspace{0.5em}

In this case, the sum in 
\eqref{eq-m-sum-jump-Z}  is over the jump points of the
observable process~$X$ from state~$l$ to state~$n$,
irrespective of jumps of $S$.  Hence, the conditional mean
of the number of jumps can be written as
\begin{align}
 \hm_{ij}^{ln} &= \mathop{\sum_{k:x_k=l,}}_{x_{k \! + \! 1}=n}
  P( Z(t_k -) = (l,i), Z(t_k) = (n,j) \mid X(\tau), 0 \leq \tau \leq T ) . \label{eq-numjumps2}
\end{align}
We have the following result, which holds for $T \geq t_N$.

\begin{prop} \label{prop-njumps-2}
Let
\begin{align}
\cJ_k = \tR(k \! + \! 2) \tL(k) e^{H_{x_k x_k} \Delta t_k},~~k=0,\ldots, N \! - \! 1.
\label{eq-Jk-comp}
\end{align}
Then for $l \neq n$,
\begin{align}
 \hm_{ij}^{ln} = \left [ H_{ln}
 	  \odot \mathop{\sum_{k:x_k=l,}}_{x_{k+1}=n} \frac{\cJ^\prime_k}{c_{k+1}} 
     \right ]_{ij}, \label{eq-m-ln-ij}
\end{align}
\end{prop}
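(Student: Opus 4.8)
The plan is to retrace the steps of the proof of Proposition~\ref{prop-njumps-1}, Case~1, but the argument is in fact shorter because no Riemann-integral limiting step is required. When $l \neq n$ the observable process jumps from $l$ to $n$ only at its own jump times, so the sum in \eqref{eq-m-sum-jump-Z} collapses immediately to the finite sum \eqref{eq-numjumps2} over precisely those indices $k$ with $x_k = l$ and $x_{k+1} = n$, the $l \to n$ transition occurring at $t_{k+1}$. It therefore suffices to evaluate, for each such $k$, the conditional probability $P(Z(t_{k+1}-) = (l,i),\, Z(t_{k+1}) = (n,j) \mid X(\tau),\, 0 \leq \tau \leq T)$ in closed form and then sum over $k$.

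First I would apply Bayes' rule to write this conditional probability as the ratio of $P(Z(t_{k+1}-) = (l,i),\, Z(t_{k+1}) = (n,j),\, X(\tau),\, 0 \leq \tau \leq T)$ to $P(X(\tau),\, 0 \leq \tau \leq T)$, and factor the numerator through the Markov renewal property of $\{(Z_k,T_k)\}$ exactly as in the chain of equalities leading to \eqref{eq-Pijl}. The observations up to $t_k$ contribute the forward density $L(k)$; over the dwell interval $[t_k,t_{k+1})$ the observable is held at $l$, so by Proposition~\ref{prop-fln} the underlying process is propagated from $L(k)$ by $\bar f^{\,l}(\Delta t_{k+1}) = e^{H_{ll}\Delta t_{k+1}}$ and must land on state $i$ immediately before the jump; the simultaneous transition $(l,i)\to(n,j)$ at $t_{k+1}$ contributes the scalar $h_{ln}(ij)$; and the observations after $t_{k+1}$ contribute the backward density $R(k+2)$. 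If $T > t_N$, the extra factor $\bar f^{\,x_N}(T-t_N)\bone$ appears in both numerator and denominator and cancels, which is why the statement holds for every $T \geq t_N$.

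Next I would normalize using \eqref{eq-PX-ck}, i.e.\ $P(X(\tau),\,0\le\tau\le T) = \prod_{m=1}^N c_m$, and replace $L(k)$ and $R(k+2)$ by their scaled counterparts via \eqref{eq-scaled-LR}; the product of scaling constants telescopes, leaving the single factor $c_{k+1}$ in the denominator. The resulting scalar, $\bone_j^\prime\, \tR(k+2)\, \tL(k)\, e^{H_{x_k x_k}\Delta t_{k+1}}\, \bone_i$, is the $(j,i)$ entry of the matrix $\cJ_k$ of \eqref{eq-Jk-comp} (note that $\tR(k+2)\tL(k)$ is a rank-one outer product), equivalently the $(i,j)$ entry of $\cJ_k^\prime$. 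Hence each term of \eqref{eq-numjumps2} equals $h_{ln}(ij)\,[\cJ_k^\prime]_{ij}/c_{k+1}$; summing over the admissible $k$ and pulling the factors $h_{ln}(ij)$ into a Hadamard product with $H_{ln}$ yields \eqref{eq-m-ln-ij}. Since $e^{H_{x_k x_k}\Delta t_{k+1}}$ is already produced by the forward/backward recursions, no analogue of the Van~Loan integral evaluation used for $\cI_k$ in Case~1 is needed here.

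There is no serious obstacle. The only points requiring care are the index bookkeeping — associating the $l \to n$ jump with $t_{k+1}$ and with the dwell time $\Delta t_{k+1}$ of $X$ in state $l$, and advancing the underlying process to the state \emph{immediately before} that jump rather than after it — together with checking that the $\bone_i\,\bone_j^\prime$ sandwich produces $\cJ_k^\prime$ with the correct orientation, exactly as in the passage \eqref{eq-Pijl}--\eqref{eq-m-ll-ij} of the previous proof.
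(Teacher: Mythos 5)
Your proof is correct and follows essentially the same route as the paper's: Bayes' rule plus the Markov renewal factorization of the path density, insertion of the jump factor $h_{ln}(ij)$ between $\bar{f}^{x_k}(\cdot)\bone_i$ and $\bone_j^\prime R(k+2)$, normalization via \eqref{eq-PX-ck} and \eqref{eq-scaled-LR} so that only $c_{k+1}$ survives, and collection of the terms into a Hadamard product with $H_{ln}$. Your index bookkeeping (dwell time $\Delta t_{k+1}$ in state $x_k=l$, jump at $t_{k+1}$) is the internally consistent version; note that \eqref{eq-Jk-comp} and the last line of \eqref{eq-Zlnij} write $\Delta t_k$, an apparent off-by-one slip in the paper that your derivation silently corrects when you identify your expression with $\cJ_k^\prime$.
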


\begin{proof}
Suppose $k$ is such that $x_k = l$ and $x_{k+1} = n$.
Similarly to Proposition~\ref{prop-njumps-1},
\begin{align}
 & P( Z(t_k -) \! = \!  (l,i), Z(t_k) \! = \! (n,j) \mid X(\tau), 0 \leq \tau \leq T ) \nonumber \\
 &= 
\frac{P(Z(t_k -) \! = \! (l,i), Z(t_k) \! = \! (n,j), X(\tau), 0 \leq \tau \leq T )}
{P(X(\tau), 0 \leq \tau \leq T)} \nonumber \\
&=
\frac{h_{ln}(ij)  \left \{ \!  \mu_{x_0}
  \prod_{m=1}^{k} f^{x_{m \! - \! 1} x_{m}}(\Delta t_m) \! \right \}}
  {P(X(\tau), 0 \leq \tau \leq T)} 
   \bar{f}^{x_{k \! + \! 1}}(\Delta t_{k \! + \!1}) 
 	\bone_i   \bone_j^\prime 
 	\left \{ \! \prod_{m=k+2}^N f^{x_{m \! - \! 1} x_{m}}
 	(\Delta t_m) \! \right \}  \! \bone \nonumber \\
&= \frac{h_{ln}(ij)}{P(X(\tau), 0 \leq \tau \leq T)}
  L(k)  \bar{f}^{x_k}(\Delta t_k) \bone_i  \bone_j^\prime R(k \! + \! 2) \nonumber \\
&= \frac{h_{ln}(ij)}{c_{k+1}}
  \left [ \tR(k \! + \! 2) \tL(k)  \bar{f}^{x_k}(\Delta t_k)  \right ]_{ji} .
 \label{eq-Zlnij}
\end{align}
Substituting \eqref{eq-Zlnij} into \eqref{eq-numjumps2}, 
\begin{align}
 \hm_{ij}^{ln} &= \mathop{\sum_{k:x_k=l,}}_{x_{k+1}=n}
\frac{h_{ln}(ij)}{c_{k+1}} 
  \left [ \tR(k+2) \tL(k)  \bar{f}^{x_k}(\Delta t_k)  \right ]_{ji}  .
 \label{eq-hmij-ln}
\end{align}
The result follows by using 
\eqref{eq-fbar_l2} in \eqref{eq-hmij-ln} and defining
$\cJ_k$ as the bracketed term in that expression. 
\end{proof}

\subsection{Dwell time estimate}

Next, we provide an expression for the dwell time estimate 
$\hD_i^l$.  Taking the conditional expectation in \eqref{eq-Dil},
it follows that
\begin{align}
\hD_i^l = \int_0^T P(Z(t) = (l, i)  \mid X(\tau), 0 \leq \tau \leq T ) dt  .
 \label{eq-dwelltime} 
\end{align}
We have the following result, which is stated for the case
$T = t_N$.
 
\begin{prop}
\begin{align}
\hD_i^l =
\left [ \sum_{k:x_k=l} \frac{\cI_k^\prime}{c_{k+1}} \right ]_{ii} ,
\label{eq-D-i-l}
\end{align}
where $\cI_k$ is given in \eqref{eq-Ik-comp}.
\end{prop}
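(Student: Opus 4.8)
The plan is to mimic, almost verbatim, the proof of Proposition~\ref{prop-njumps-1} (Case~1), since the integral defining $\hD_i^l$ in \eqref{eq-dwelltime} has exactly the structure of \eqref{eq-numjumps}; the only difference is that no jump of $Z$ occurs at the integration time $t$. First I would fix an interval $[t_k,t_{k+1})$ with $x_k=l$ and a time $t$ in that interval, and evaluate the conditional density $P(Z(t)=(l,i)\mid X(\tau),0\le\tau\le T)$ by writing the joint density of the event $\{Z(t)=(l,i)\}$ together with the observed path as a product of transition densities: the path up to $t_k$ contributes $\mu_{x_0}\prod_{m=1}^{k}f^{x_{m-1}x_m}(\Delta t_m)=L(k)$; the sub-interval $[t_k,t)$, during which $X$ stays in $l$, contributes $\bar f^{x_k}(t-t_k)$; requiring $S(t)=i$ inserts the rank-one projector $\bone_i\bone_i^\prime$; the sub-interval $[t,t_{k+1})$ followed by the jump of $X$ at $t_{k+1}$ contributes $f^{x_kx_{k+1}}(t_{k+1}-t)$; and the remainder of the path contributes $\prod_{m=k+2}^{N}f^{x_{m-1}x_m}(\Delta t_m)\bone=R(k+2)$. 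Dividing by $P(X(\tau),0\le\tau\le T)$ and passing to the scaled iterates via \eqref{eq-scaled-LR} and \eqref{eq-PX-ck}, exactly as in the derivation of \eqref{eq-Pijl}, and using the same cyclic rearrangement of the resulting scalar, I would obtain
\[
P(Z(t)=(l,i)\mid X(\tau),0\le\tau\le T)=\frac{1}{c_{k+1}}\left[f^{x_kx_{k+1}}(t_{k+1}-t)\,\tR(k+2)\,\tL(k)\,\bar f^{x_k}(t-t_k)\right]_{ii}.
\]

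Next I would substitute this into \eqref{eq-dwelltime}, split $\int_0^T$ into the sub-integrals over the intervals $[t_k,t_{k+1})$, and retain only those with $x_k=l$, since the joint state $(l,i)$ can be occupied only while $X=l$. Changing variables by $y=t-t_k$ and invoking \eqref{eq-f_ln} and \eqref{eq-fbar_l2}, each retained sub-integral becomes
\[
\int_0^{\Delta t_{k+1}} e^{H_{x_kx_k}(\Delta t_{k+1}-y)}\,H_{x_kx_{k+1}}\,\tR(k+2)\,\tL(k)\,e^{H_{x_kx_k}y}\,dy,
\]
which is precisely the matrix $\cI_k$ appearing in \eqref{eq-Ik-exp}. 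Hence $\hD_i^l=\sum_{k:x_k=l}c_{k+1}^{-1}[\cI_k]_{ii}$, and since $[\cI_k]_{ii}=[\cI_k^\prime]_{ii}$ this is exactly \eqref{eq-D-i-l}; the representation \eqref{eq-Ik-comp} of $\cI_k$ as the upper-right block of $e^{C_k\Delta t_{k+1}}$, already established in Proposition~\ref{prop-njumps-1} through Van Loan's identity \cite{vanloan:1978}, then makes the expression directly computable.

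There is no genuinely new obstacle here: the argument is a specialization of Proposition~\ref{prop-njumps-1}. The one point requiring care is the bookkeeping at time $t$ — because $Z$ does \emph{not} jump there, the correct insertion is the single projector $\bone_i\bone_i^\prime$ rather than $h_{ll}(ij)\,\bone_i\bone_j^\prime$, so the rate factor $h_{ll}(ij)$ drops out and the relevant entry of the integrated product is the diagonal one. This is exactly what turns the ``number of jumps'' formula \eqref{eq-m-ll-ij} into the ``dwell time'' formula \eqref{eq-D-i-l}. As in Proposition~\ref{prop-njumps-1}, the statement is given for $T=t_N$, so no separate contribution from a trailing interval $[t_N,T]$ appears; handling general $T\ge t_N$ would simply add such a term.
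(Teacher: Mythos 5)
Your proposal is correct and follows essentially the same route as the paper: the paper's proof restricts the integral in \eqref{eq-dwelltime} to intervals with $x_k=l$, derives exactly your conditional density \eqref{eq-Zli} by inserting the projector $\bone_i\bone_i^\prime$ (no rate factor, since $Z$ does not jump at $t$), and then recognizes the resulting integral as the same $\cI_k$ of Proposition~\ref{prop-njumps-1}. Your added remarks on $[\cI_k]_{ii}=[\cI_k^\prime]_{ii}$ and the $T=t_N$ convention are consistent with the paper's treatment.
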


\begin{proof}
The integrand in \eqref{eq-dwelltime} can be non-zero
only for values of $t$ for which $X(t)$ is in state~$l$.
Hence, it follows that
\begin{align}
\hD_i^l &= \sum_{k:x_k=l}  \int_{t_{k}}^{t_{k+1}}
	 P( Z(t) = (l,i) \mid X(\tau), 0 \leq \tau \leq T ) dt . \label{eq-dwelltime2}
\end{align}
For $t \in [t_{k}, t_{k+1})$ and $x_k = l$, we
have similarly to Proposition~\ref{prop-njumps-1},
\begin{align}
P( &  Z(t)=(l,i) \mid X(\tau), 0 \leq \tau \leq T ) 
= \frac{ P(Z(t) = (l,i), X(\tau), 0 \leq \tau \leq T ) }{P(X(\tau), 0 \leq \tau \leq T)} \nonumber \\
&= \frac{\left \{ \mu_{x_0}\prod_{m=1}^{k} f^{x_{m-1} x_{m}}(\Delta t_m) \right \}}
  {P(X(\tau), 0 \leq \tau \leq T)}
  \cdot
\bar{f}^{x_k}(t \! - \! t_{k}) \bone_i
 \bone_i^\prime    \nonumber \\
&~~~~~~~~ \cdot
 \left \{ 
 f^{x_k x_{k+1}}(t_{k+1} \! - \!  t) \prod_{m=k+2}^{N} f^{x_{m-1} x_{m}}(\Delta t_m) \right \}
  \bone \nonumber \\
&= \frac{L(k)}{P(X(\tau), 0 \leq \tau \leq T)}  \bar{f}^{x_k}(t \! - \! t_{k}) \bone_i
  \bone_i^\prime f^{x_k x_{k \! + \! 1}} \! (t_{k \! + \! 1} \! - \! t) \!  R(k \! + \! 2) \nonumber \\
&= \frac{1}{c_{k+1}} \!  \left [ f^{x_k x_{k \! + \! 1}}(t_{k \! + \! 1} \! - \! t)
	 \tR(k \! + \! 2) \! 
  \tL(k) \! \bar{f}^{x_k}(t \! - \! t_{k}) \! \right ]_{ii} . \label{eq-Zli}
\end{align}
The result follows from substituting \eqref{eq-Zli} into \eqref{eq-dwelltime2}.
\end{proof}

\section{Implementation and Numerical Example}
\label{sec:numerical}

The EM algorithm for continuous-time bivariate Markov chains 
developed in Section~\ref{sec:EM} was
implemented in Python using the SciPy and NumPy libraries.
The matrix exponential function from the SciPy library
is based on a Pad\'{e} approximation, which
has a computational complexity of $O(r^3)$
for an $r \times r$ matrix (see \cite{moler:2003}).
For comparison purposes, the parameter
estimation algorithm based on time-sampling proposed
in \cite{wei:2002} was also implemented in Python.
We refer to this algorithm as the {\em Baum-based
algorithm} for estimating the parameter of
continuous-time bivariate Markov chains.

\subsection{Baum-based Algorithm}

In the Baum-based algorithm described in \cite{wei:2002}, 
the continuous-time bivariate Markov chain $Z$ is  
time-sampled to obtain a discrete-time bivariate Markov chain 
$\tZ = (\tX, \tS) = \{ \tZ_k = (\tX_k, \tS_k) \}$, where
\[
\tZ_k = Z(k \Delta),~~~ k = 0, 1, 2, \ldots,
\]
and $\Delta$ is the sampling interval.  Let $R$ denote the
transition matrix of $\tZ$.  A variant
of the Baum algorithm \cite{baum:1970} is then employed to obtain a maximum 
likelihood estimate, $\hR$, of $R$.  An estimate of the generator
of the continuous-time bivariate Markov chain is obtained
from 
\begin{align}
\hH = \frac{1}{\Delta} \ln( \hR ), \label{eq:inv-P}
\end{align}
where $\ln( \hR )$ denotes the principal branch of the
matrix logarithm of $\hR$, which is given by its
Taylor series expansion
\begin{align}
\ln( \hR ) = \sum_{n=1}^\infty (-1)^{n-1} \frac{(\hR-I)^n}{n} ,
\label{eq-matrix-log}
\end{align}
whenever the series converges.
Existence and uniqueness of a generator $\hH$
corresponding to a transition matrix $\hR$
are not guaranteed (see \cite{israel:2001,roberts:2008}).
In practice, existence and uniqueness
of $\hH$ for a given $\hR$ depend
on the sampling interval $\Delta$ (see \cite{roberts:2008}).
Moreover, if a generator matrix $\hH$ of a certain structure is
desired (e.g., a generator for an MMPP), that structure is
difficult to impose through estimation of $\hR$.

A sufficient condition for the series in \eqref{eq-matrix-log}
to converge is that the diagonal entries of $\hR$
are all greater than 0.5, i.e., $\hR$ is strictly diagonally
dominant (see Theorem~2.2 in \cite{israel:2001} and
Theorem~1 in \cite{wei:2002}).  In this case,
the row sums of $\ln( \hR )$ are guaranteed to be zero,
but some of the off-diagonal elements may possibly
be negative \cite[Theorem 2.1]{israel:2001}.  An 
approximate generator can then be obtained by setting the 
negative off-diagonal entries to zero and adjusting the
diagonal elements such that the row sums of the modified
matrix are zero.  If $\hR$ is not strictly diagonally dominant,
the algorithm in \cite{wei:2002} uses the first term
in the series expansion of \eqref{eq-matrix-log} to
obtain an approximate generator, i.e.,
\begin{align}
\hH = \frac{\hR - I}{\Delta} .
\label{eq-H-approx}
\end{align}

\subsection{Computational and Storage Requirements}
\label{subsec:comp}

The computational requirement of the EM algorithm developed
in Section~\ref{sec:EM} depends linearly on the number of
jumps, $N$, of the observable process.  For each jump
of the observable process,
matrix exponentials
for the transition density matrix $f^{x_k x_{k+1}}(\Delta t_k)$ in
\eqref{eq-forward} and \eqref{eq-backward} and 
for the matrix $\cI_k$ in \eqref{eq-Ik-comp} are
computed.  Computation of
the matrix exponential of an $r \times r$ matrix
requires $O(r^3)$ arithmetic operations (see \cite{moler:2003}).  Thus, the computational
requirement due to computation of matrix exponentials is $O(N r^3)$.
The element-by-element matrix multiplications in 
\eqref{eq-m-ll-ij} and \eqref{eq-m-ln-ij} contribute a computational
requirement of $O(N (r^2 d^2))$.  Therefore, 
the overall computational complexity of the EM algorithm can be stated
as $O(N (r^3 + r^2 d^2))$.
The storage requirement of the EM algorithm is dominated by
the (scaled) forward and backward variables $\tL(k)$ and $\tR(k)$.
Hence, the overall storage required is $O(Nr)$.

By comparison, the computational requirement of the Baum-based
algorithm is $O(\tN r^2 d^2)$, where $\tN = T / \Delta$ is the
number of discrete-time samples.  The storage requirement of
the Baum-based algorithm is $O(\tN r d)$.  Clearly, both the computational and
storage requirement of this algorithm are highly dependent on the
choice of the sampling interval $\Delta$.

\subsection{Numerical Example}
\label{subsec:example}

\begin{table}
\centering
 \begin{tabular}{|c||cc|cc|cc|}
 	\cline{2-7}
	\multicolumn{1}{c||}{ }  &  
	\multicolumn{2}{c|}{$\phi^0$}  & \multicolumn{2}{c|}{$\phi_0$}  & \multicolumn{2}{c|}{$\hphi_{\rm em}$}  \\ \hline \hline
       $H_{11}$   &  -70    & 10     & -120  & 30  &  -77.03   &  14.76   \\
                  &   20    & -55    &   2   & -8  &  8.46    &  -47.95     \\ \hline
       $H_{12}$   &  50     & 10     & 70    & 20  & 51.80    &  10.46     \\ 
                  &  25     & 10     & 5     & 1   &  32.66  &  6.83     \\ \hline
       $H_{21}$   &  50     & 0      & 70    & 0   & 49.50   &  0     \\ 
                  &  0      & 10     & 0     & 1   &  0   &  9.54     \\ \hline
       $H_{22}$   &  -60    & 10     & -100  & 30  &  -59.51 & 10.01     \\ 
                  &   20    & -30    &   2   & -3  &  19.61   &  -29.15     \\ \hline
	\end{tabular}
	\caption{$\phi^0 = {\rm true}$; $\phi_0 = {\rm initial}$; 
			$\hat{\phi}_{\rm em} = \mbox{\rm EM-based estimate}$.}
	\label{tab:em}
\end{table}

\begin{table}
\centering
 \begin{tabular}{|c||cc|cc|cc|cc|}
 	\cline{2-9}
 		\multicolumn{1}{c||}{ }  &  
 		\multicolumn{8}{c|}{$\hphi_{\rm baum}$} \\ \hline
	\multicolumn{1}{|c||}{$\Delta$}  &  
	\multicolumn{2}{c|}{$0.1$}  & 
	\multicolumn{2}{c|}{$0.01$}  & 
	\multicolumn{2}{c|}{$0.005$}  & 
	\multicolumn{2}{c|}{$0.0025$}  \\ \hline \hline
$H_{11}$ &  -10.00 & 0.13  & -77.54 & 10.40  & -78.18 & 15.40  & -80.43 & 15.98   \\
         &  2.35   & -6.08 &  11.30 & -49.21 &   6.16 & -48.20 &  8.73  & -48.53   \\ \hline
$H_{12}$ &  0.42   & 9.44  &  57.34 & 9.81   &  49.29 & 13.48  & 52.13  &  12.32     \\ 
         &  1.28   & 2.45  &  20.63 & 17.29  &  33.49 &  8.55  &  31.44 &  8.36     \\ \hline
$H_{21}$ &  0.00   & 7.92  &  7.10  & 2.91   &  52.04 & 1.18   & 51.92  &  0.45     \\ 
         &  1.85   & 1.39  &  2.19  & 15.36  &  2.11  & 9.74   &  0.97  &  10.63     \\ \hline
$H_{22}$ &  -10.00 & 2.18  & -56.29 & 6.28   & -64.22 & 11.01  & -64.51 &  12.13     \\ 
         &   0.53  & -3.76 &  4.21  & -21.77 &  16.09 & -27.93 &  17.72 &  -29.31     \\ \hline
\end{tabular}
	\caption{Parameter estimates obtained using the Baum-based approach 
		of \cite{wei:2002} with different sampling intervals $\Delta$.}
	\label{tab:baum}
\end{table}

A simple numerical example of estimating the parameter of a continuous-time
bivariate Markov chain using the EM procedure developed in Section~\ref{sec:EM}
is presented in Table~\ref{tab:em}.  For this example, 
the number of underlying states is $r=2$ and the number of observable states is $d=2$.
The generator matrix $H$ is displayed in terms of its block matrix
components $H_{ln}$, which are $2 \times 2$ matrices for
$l, n \in \{1, 2 \}$.  The column labeled $\phi^0$ shows the
true parameter value for the bivariate Markov chain.  Similarly,
the columns labeled $\phi_0$ and $\hphi_{\rm em}$ show, 
respectively, the initial parameter and the EM-based estimate rounded
to two decimal places.
The observed data, generated using the true parameter $\phi^0$,
consisted of $N = 10^4$ observable jumps.
The EM algorithm was terminated when the relative difference
of successive log-likelihood values, evaluated
using \eqref{eq-loglikelihood}, fell below $10^{-7}$.

The bivariate Markov chain parameterized by $\phi^0$ in 
Table~\ref{tab:em} is neither a BMAP nor an MMMP.
Indeed, $H$ is not block circulant as in a
BMAP and $H_{12}$ is not diagonal as in an MMMP.
Moreover, according to \cite[Theorem~3.1]{ball:1993},
the underlying process $S$ is not a homogeneous continuous-time  Markov chain since
$H_{11}+H_{12} \neq H_{21} + H_{22}$ (cf.\ \eqref{eq-S-ctmc}).  

The estimate $\hphi_{\rm em}$ was obtained
after 63 iterations of the EM procedure.  An important property of the EM algorithm is that whenever an
off-diagonal element of 
the generator $H$ is zero in the initial
parameter, the corresponding element in any EM iterate 
remains zero. This can be seen 
easily from Propositions~\ref{prop-njumps-1} 
and \ref{prop-njumps-2}. Thus, if 
structural information about $H$ is known, that 
structure can be incorporated into the initial
parameter estimate and it will be preserved by the EM algorithm
in subsequent iterations.
In the example of Table~\ref{tab:em}, $H_{21}$ is
diagonal in the initial parameter $\phi_0$ and retains
its diagonal structure in the estimate $\hphi_{\rm em}$.
We also see that
the estimate of $H_{21}$, which has the diagonal
structure required for an MMMP, is markedly more accurate than
that of $H_{12}$.  Based on the numerical experience gained from
this and other examples, we can qualitatively say that
estimation of the diagonal elements
of $H_{ln}$ ($l \neq n$) tends to be more accurate and requires
fewer iterations than that of the off-diagonal elements.

For comparison purposes, we have implemented the Baum-based approach 
proposed in \cite{wei:2002} and applied it to the bivariate
Markov chain specified 
in Table~\ref{tab:em} with true parameter $\phi^0$ and
initial parameter estimate $\phi_0$, using the sampling intervals
$\Delta = 0.1$, $0.01$, $0.005$, and $0.0025$.  The corresponding
number of discrete-time samples $\tN$ was 2408, 24077, 48153,
and 96305, respectively.
The algorithm
was terminated when the relative difference
of successive log-likelihood values fell below $10^{-7}$.
The number of iterations required for the four sampling
intervals was $499$, $446$, $105$, and $123$, respectively.
In this example, a generator matrix could be obtained
from the transition matrix estimate using \eqref{eq:inv-P}
for all of the sampling intervals except for $\Delta = 0.1$.
When $\Delta = 0.1$, the generator was obtained using
the approximation~\eqref{eq-H-approx}. 

The results are shown in Table~\ref{tab:baum}.
For all of the sampling intervals, the estimate of $H_{21}$
is not a diagonal matrix, but the accuracy of this estimate
appears to improve as $\Delta$ is decreased.  
The Baum-based estimates of the other block matrices $H_{ln}$
also appear to become closer in value to the EM-based estimate
$\hphi_{\rm em}$ shown in Table~\ref{tab:em} as the sampling
interval decreases.  On the other
hand, as $\Delta$ decreases, the computational requirement
of the Baum-based approach increases proportionally,
as discussed in Section~\ref{subsec:comp}.
In the case $\Delta = 0.1$, the parameter estimate
is far from the true parameter, which is not surprising,
as many jumps of the observable process are missed in the sampling
process.  Indeed, the likelihood of the
final parameter estimate $\hphi_{\rm baum}$
obtained in this case is actually lower
than that of the initial parameter estimate $\phi_0$ given in 
Table~\ref{tab:em}.
This example illustrates not only
the high sensitivity of the final parameter estimate with respect to
the size of the sampling interval, but also that the likelihood
values of the continuous-time bivariate Markov chain
may decrease from one iteration to the next in the Baum-based approach.
In contrast, the EM algorithm generates a sequence of
parameter estimates with nondecreasing likelihood values.  Conditions
for convergence of the sequence of parameter estimates
were given in~\cite{wu:1983}.

\section{Conclusion}
\label{sec:conclusion}

We have studied properties of the
continuous-time bivariate Markov chain and developed
explicit forward-backward recursions for estimating its parameter
based on the EM algorithm. 
The proposed EM algorithm does not require any sampling
scheme or numerical integration.  The bivariate Markov chain
generalizes a large class of stochastic models including the MMMP
and the BMAP, which both generalize the MMPP but
are not equivalent.  In its general form, the bivariate Markov chain
has been used to model ion channel currents (see \cite{ball:1993})
and congestion in computer networks (see \cite{wei:2002}).  
Since the proposed EM procedure preserves the zero values in the
estimates of the generator for the bivariate Markov chain,
it can be applied to estimate the parameter of
special cases, for example,
the MMMP and BMAP, by specifying an initial parameter 
estimate of the appropriate form.

\appendix

\section{Proof of Proposition~\ref{prop-phase}}
\label{sec-app-dwell}

The density corresponding to \eqref{eq-cond-prob-dwell} can be
expressed as
\begin{align}
f_{\Delta T_k, X_k, S_k \mid X_{k-1}, S_{k-1}} (\tau, n, j \mid l, i ) 
= [f^{ln}(\tau)]_{ij} ,~~k \geq 1.
\label{eq-f-TXS}
\end{align}
Summing both sides of \eqref{eq-f-TXS} 
over $n$, for $n \neq l$, and over $j$, applying \eqref{eq-f_ln}, 
and using $\beta_l \triangleq \sum_{n:n \neq l} H_{ln} \bone$,
we obtain
\begin{align}
f_{\Delta T_k \mid X_{k-1}, S_{k-1}} (\tau \mid l, i)
 = \left [ e^{H_{ll} \tau} \sum_{n:n \neq l} H_{ln} \bone \right ]_i 
 = \left [ e^{H_{ll} \tau} \beta_l \right ]_i .
 \label{eq-f-T}
\end{align}
Applying the law of total probability,
\begin{align}
f_{\Delta T_k \mid X_{k-1}} (\tau \mid l)
 =  \sum_i P(S_{k - 1}  = i | X_{k  - 1}  =  l) \left [ e^{H_{ll}\tau} \beta_l \right ]_i.
 \label{eq-pht-a}
\end{align}
Taking the limit as $k \rightarrow \infty$, it follows that
\begin{align}
\lim_{k \rightarrow \infty} f_{\Delta T_k \mid X_{k-1}} (\tau \mid l) =
	\alpha_l e^{H_{ll} t\tau} \beta_l .
	 \label{eq-pht-b}
\end{align}

Equation~\eqref{eq-pht-b} has the form of a phase-type distribution
parameterized by $(\alpha_l, H_{ll})$  \cite[Chapter 2]{neuts:1981}. Indeed,
the stationary distribution of $\Delta T_k$ conditioned on $X_{k-1} = l$ 
is equivalent to the distribution of the absorption time of a
Markov chain defined on the state space $\{ 1, 2, \ldots, r+1 \}$  with 
initial distribution given by $\alpha_l$ and generator 
matrix given by
\begin{align}
G = \left [
 \begin{array}{cc}
 	H_{ll} & \beta_l \\
 	\bzero & 0 
 \end{array}
\right ]  ,
\end{align}
where $\bzero$ denotes a row vector of all zeros.
Here, $r+1$ is an absorbing state, while the remaining states, $1, \ldots, r$,
are transient.  

\bibliographystyle{plain}
\bibliography{rmp-bib}

\begin{thebibliography}{10}

\bibitem{albert:1962}
A.~Albert.
\newblock {Estimating the infinitesimal generator of a continuous time, finite
  state Markov process}.
\newblock {\em Annals of Mathematical Statistics}, 23(2):727--753, 1962.

\bibitem{asmussen:1996}
S.~Asmussen, O.~Nerman, and M.~Olsson.
\newblock Fitting phase-type distributions via the {EM} algorithm.
\newblock {\em Scand. J. Stat.}, 23(4):419--441, 1996.

\bibitem{ball:1993}
F.~Ball and G.~F. Yeo.
\newblock Lumpability and marginalisability for continuous-time {Markov}
  chains.
\newblock {\em Journal of Applied Probability}, 30(3):518--528, 1993.

\bibitem{baum:1970}
L.~E. Baum, T.~Petrie, G.~Solues, and N.~Weiss.
\newblock A maximization technique occurring in the statistical analysis of
  probabilistic functions of {Markov} chains.
\newblock {\em Ann. Math. Statist.}, 41:164--171, 1970.

\bibitem{breuer:2002}
L.~Breuer.
\newblock {An EM algorithm for batch Markovian arrival processes and its
  comparison to a simpler estimation procedure}.
\newblock {\em Annals of Operations Research}, 112:123--138, 2002.

\bibitem{cinlar:1975}
E.~\c{C}inlar.
\newblock {Markov renewal theory: A survey}.
\newblock {\em Management Science}, 21(7):727--752, 1975.

\bibitem{ephraim:2002}
Y.~Ephraim and N.~Merhav.
\newblock {Hidden Markov processes}.
\newblock {\em IEEE Trans. Inform. Theor}, 48:1518--1569, 2002.

\bibitem{ephraim:2009}
Y.~Ephraim and W.~J.~J. Roberts.
\newblock {An EM algorithm for Markov modulated Markov processes}.
\newblock {\em IEEE Trans. Sig. Proc.}, 57(2), 2009.

\bibitem{ferguson:1980}
J.~D. Ferguson.
\newblock {Variable duration models for speech}.
\newblock In {\em Symp. Application of Hidden Markov Models to Text and
  Speech}, pages 143--179, Oct. 1980.

\bibitem{fischer:1992}
W.~Fischer and K.~Meier-Hellstern.
\newblock {The Markov-modulated Poisson process (MMPP) cookbook}.
\newblock {\em Perform. Eval.}, 18:149--171, 1992.

\bibitem{freed:1982}
D.~S. Freed and L.~A. Shepp.
\newblock {A Poisson process whose rate is a hidden Markov chain}.
\newblock {\em Adv. Appl. Probab.}, 14:21--36, 1982.

\bibitem{israel:2001}
R.~B. Israel, J.~S. Rosenthal, and J.~Z. Wei.
\newblock {Finding generators for Markov chains via empirical transition
  matrices with applications to credit ratings}.
\newblock {\em Math. Finance}, 11(2):245--265, 2001.

\bibitem{klemm:2003}
A.~Klemm, C.~Lindemann, and M.~Lohmann.
\newblock {Modeling IP traffic using the batch Markovian arrival process}.
\newblock {\em Performance Evaluation}, 54:149--173, 2003.

\bibitem{lucantoni:1991}
D.~M. Lucantoni.
\newblock {New results on the single server queue with a batch Markovian
  arrival process}.
\newblock {\em Stochastic Models}, 7(1):1--46, 1991.

\bibitem{moler:2003}
C.~Moler and C.~{Van Loan}.
\newblock Nineteen dubious ways to compute the exponential of a matrix,
  twenty-five years later.
\newblock {\em SIAM Review}, 45(1):3--49, 2003.

\bibitem{neuts:1981}
M.~F. Neuts.
\newblock {\em Matrix-Geometric Solutions in Stochastic Models: An Algorithmic
  Approach}.
\newblock Dover Publications, Inc., 1981.

\bibitem{roberts:2008}
W.~J.~J. Roberts and Y.~Ephraim.
\newblock {An EM algorithm for ion-channel current estimation}.
\newblock {\em IEEE Trans. Sig. Proc.}, 56:26--33, 2008.

\bibitem{roberts:2006}
W.~J.~J. Roberts, Y.~Ephraim, and E.~Dieguez.
\newblock {On \Ryden's EM algorithm for estimating MMPPs}.
\newblock {\em IEEE Sig. Proc. Let.}, 13(6):373--377, 2006.

\bibitem{rudemo:1973}
M.~Rudemo.
\newblock {State Estimation for Partially Observed Markov Chains}.
\newblock {\em J. Mathematical Analysis and Applications}, 56:26--33, 1973.

\bibitem{ryden:1994}
T.~Ryd\'{e}n.
\newblock {Parameter estimation for Markov modulated Poisson processes}.
\newblock {\em Comm. Statist. Stochastic Models}, 10(4):795--829, 1994.

\bibitem{ryden:1996}
T.~Ryd\'{e}n.
\newblock {An EM algorithm for estimation in Markov-modulated Poisson
  processess}.
\newblock {\em Computational Statistics and Data Analysis}, 21:431--447, 1996.

\bibitem{sohn:2007}
R.~A. Sohn.
\newblock {Stochastic analysis of exit fluid temperature records from the
  active TAG hydrothermal mound (Mid-Atlantic Ridge, $26^{\circ}$N): 2. Hidden
  Markov models of flow episodes}.
\newblock {\em J. Geophysical Res}, 112, 2007.

\bibitem{strang:2003}
G.~Strang.
\newblock {\em Introduction to Linear Algebra}.
\newblock Wellesley-Cambridge Press, 3rd edition, 2003.

\bibitem{vanloan:1978}
C.~F. {Van Loan}.
\newblock Computing integrals involving the matrix exponential.
\newblock {\em IEEE Trans. on Automatic Control}, 23(3), 1978.

\bibitem{wei:2002}
W.~Wei, B.~Wang, and D.~Towsley.
\newblock {Continuous-time hidden Markov models for network performance
  evaluation}.
\newblock {\em Performance Evaluation}, 49:129--146, Sept. 2002.

\bibitem{wolff:1989}
R.~W. Wolff.
\newblock {\em Stochastic Modeling and the Theory of Queues}.
\newblock Prentice-Hall, 1989.

\bibitem{wu:1983}
C.~F.~J. Wu.
\newblock {On the convergence properties of the EM algorithm}.
\newblock {\em Ann. Statist.}, 11(1):95--103, 1983.

\bibitem{yu:2006}
S.-Z. Yu and H.~Kobayashi.
\newblock {Practical implementation of an efficient forward-backward algorithm
  for an explicit-duration hidden Markov model}.
\newblock {\em IEEE Trans. Sig. Proc.}, 54:1947--1951, 2006.

\end{thebibliography}

\end{document}